\theoremstyle{plain}
\newtheorem{thm}{Theorem}
\newtheorem{prop}[thm]{Proposition}
\newtheorem{dfn}{Definition}
\newtheorem{fact}[thm]{Fact}
\numberwithin{thm}{section}
\begin{document}
\title{The necessity of conditions for graph quantum ergodicity and Cartesian products with an infinite graph}
\author{Theo McKenzie\thanks{\texttt{mckenzie@math.berkeley.edu.} Supported by NSF Grant DGE-1752814.}\\ University of California, Berkeley}
\date{\today}
\maketitle
\begin{abstract}
    Anantharaman and Le Masson proved that any family of eigenbases of the adjacency operators of a family of graphs is quantum ergodic (a form of delocalization) assuming the graphs satisfy conditions of expansion and high girth. In this paper, we show that neither of these two conditions is sufficient by itself to necessitate quantum ergodicity. We also show that having conditions of expansion and a specific relaxation of the high girth constraint present in later papers on quantum ergodicity is not sufficient. We do so by proving new properties of the Cartesian product of two graphs where one is infinite.
\end{abstract}

\section{Introduction}
\newcommand{\G}{\mathcal{G}}
\newcommand{\A}{\mathcal A}
\newcommand{\B}{\mathcal B}

While classical integral systems often have periodic orbits in phase space, eigenstates of quantized chaotic systems tend to be uniformly distributed (see \cite{zelditch2005quantum}). This phenomenon is expressed in the high energy limit through the concept of {quantum ergodicity}. Consider a compact Riemannian manifold $(M,g)$ and a basis of eigenfunctions $\{\psi_j\}$ of the Laplace-Beltrami operator $\Delta$ on $M$ with eigenvalues $\{\lambda_j\}$. We say $\{\psi_j\}$ is \emph{quantum ergodic} if for every continuous test function $a:M\rightarrow \R$,

\begin{equation*}
\lim_{\lambda\rightarrow +\infty} \frac{1}{N(\lambda)}\sum_{\lambda_j\leq \lambda}\left|\langle \psi_j,a \psi_j\rangle-\int_M a~d\textnormal{Vol}\right|^2=0.
\end{equation*}
Here $\langle \psi_j,a\psi_j\rangle:=\int_M a(x)|\psi(x)|^2 ~d \textnormal{Vol}(x)$ and $N(\lambda):=|\{\lambda_j\leq \lambda\}|$. Shnirelman's Theorem \cite{shnirel1974ergodic, de1985ergodicite, zelditch1987uniform} states that if the geodesic flow of $M$ is ergodic with respect to the Liouville measure, then $\{\psi_j\}$ is quantum ergodic.

Discrete graphs have provided a fruitful model for quantum chaos \cite{kottos1997quantum, kottos1999periodic}, and Brooks and Lindenstrauss initiated the study of conditions of localization and delocalization of eigenvectors on large regular discrete graphs \cite{brooks2013non}. They proved that if small sets in a graph expand well (for example the graph has high girth), all eigenvectors are delocalized in a quantifiable way depending on this expansion.

It is in this context that Anantharaman and Le Masson proved a result on discrete graphs analogous to Shnirelman's Theorem \cite{anantharaman2015quantum}. To introduce this result, we consider an infinite family of $d$-regular graphs $(G_n)=(V_n,E_n)$ with $d$ constant and $|V_n|=n$. We write $\A_n$ to denote the adjacency operator of $G_n$. $d=\lambda_1\geq \lambda_2\geq \cdots \geq \lambda_n$ are the eigenvalues of $\A_n$. We also require the following definitions.

\begin{dfn}  The family of graphs $(G_n)$ is said to satisfy \emph{EXP} if there is a constant $\epsilon>0$ such that for each $\A_n$, $\max\{\lambda_2,|\lambda_n|\}\leq (1-\epsilon)d$.
\end{dfn}

\begin{dfn}[\cite{benjamini2011recurrence}]
 Take $\mu$ to be a measure over isomorphism classes of rooted, potentially infinite graphs. The family of graphs $(G_n)$ is said to have Benjamini-Schramm limit $\mu$ if for each fixed $R>0$, as $n\rightarrow\infty$, the distribution of isomorphism classes of rooted balls of radius $R$ in $G_n$ around a root selected from $V_n$ uniformly at random converges weakly to the distribution of isomorphism classes of balls of radius $R$ around the roots of graphs according to $\mu$. For a graph $H$, we say $(G_n)$ has unrooted Benjamini-Schramm limit $H$ if the limiting measure $\mu$ is $H$ with a root of $H$ selected uniformly at random.
\end{dfn}

\begin{dfn}
The family of graphs $(G_n)$ is said to satisfy \emph{BST} if it has unrooted Benjamini-Schramm limit $T_d$, where $T_d$ is the infinite $d$-regular tree. Namely, for all fixed $R>0$, \[\lim_{n\rightarrow\infty}\frac{|\{x\in V_n, \rho(x)<R\}|}{n}\rightarrow 0,\] where $\rho(x)$ is the injectivity radius of $x$ (the largest $r$ such that the ball of radius $r$ around $x$ is a tree). 
\end{dfn}

These properties together are enough to guarantee quantum ergodicity.
\begin{thm}[\cite{anantharaman2015quantum} Theorem 1] \label{thm:qe}
Assume that $(G_n)=(V_n,E_n)$ is a family of graphs that satisfies EXP and BST. Let $a_n: V_n\rightarrow \R$ be series of functions such that $\sum_{v\in V_n} a_n(v)=0$ and $\|a_n\|_\infty \leq 1$. Then for any series of orthonormal eigenbases $(\psi_1^{(n)},\ldots, \psi_n^{(n)})$ of $(\A_n)$,
\begin{equation}\label{eq:QE}
\lim_{n\rightarrow\infty}\frac{1}{n}\sum_{i=1}^{n} |\langle \psi_i^{(n)},a_n\psi_i^{(n)}\rangle|^2=0,
\end{equation}
where
\[
\langle \psi_i^{(n)},a_n\psi_i^{(n)}\rangle=\sum_{v\in V_n} a_n(v)|\psi_i^{(n)}(v)|^2.
\]
\end{thm}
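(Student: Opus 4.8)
\emph{Proof proposal.} The plan is to run the discrete analogue of Shnirelman's quantum-variance argument, with the spectral gap \emph{EXP} playing the role of ergodicity of the geodesic flow and the tree limit \emph{BST} supplying the local model geometry. Write $A_n$ for the operator on $\ell^2(V_n)$ of multiplication by $a_n$, so that $\langle\psi_i^{(n)},a_n\psi_i^{(n)}\rangle=(A_n)_{ii}$ in the eigenbasis, and the quantity to control becomes the normalized Hilbert--Schmidt norm of the diagonal part of $A_n$,
\[
\frac{1}{n}\sum_{i=1}^{n}\bigl|\langle\psi_i^{(n)},a_n\psi_i^{(n)}\rangle\bigr|^2=\frac{1}{n}\sum_{i=1}^n\bigl|(A_n)_{ii}\bigr|^2=\frac{1}{n}\bigl\|\mathcal D_n(A_n)\bigr\|_{\mathrm{HS}}^2,
\]
where $\mathcal D_n$ retains only matrix entries joining equal eigenvalues. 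Since the constant vector is the top eigenvector with eigenvalue $d$ and $\sum_v a_n(v)=0$, the term $i=1$ vanishes; moreover, by \emph{BST} the empirical spectral measure of $\A_n$ converges to the Kesten--McKay law supported on $[-2\sqrt{d-1},2\sqrt{d-1}]$, so all but an $o(n)$ fraction of the $\lambda_i$ are (nearly) tempered, and the non-tempered indices contribute at most $\tfrac{o(n)}{n}\|a_n\|_\infty^2=o(1)$ and may be discarded.

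The first substantive step is a discrete Egorov theorem. On the tempered band parametrize $\lambda=2\sqrt{d-1}\cos\theta$ and let $U_t=e^{it\Theta(\A_n)}$ be the associated wave propagator, a \emph{function} of $\A_n$ and hence diagonal in the eigenbasis; conjugation by $U_t$ therefore preserves each diagonal entry exactly, and so does the time average $\langle A_n\rangle_T:=\tfrac1T\sum_{t=1}^{T}U_t^{*}A_nU_t$. The content of Egorov is that $U_t^{*}A_nU_t$ is, up to a controllable error, multiplication by $a_n$ transported by $t$ steps of the geodesic (non-backtracking) flow, so that $\langle A_n\rangle_T$ is approximately multiplication by a ball/sphere average of $a_n$. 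Here \emph{BST} is what justifies the symbol calculus: because the injectivity radius exceeds any fixed $R$ at all but $o(n)$ vertices, the distance operators $\A_n^{(k)}$ agree for $k\le R$ with the fixed polynomials of $\A_n$ dictated by the spherical functions of $T_d$, matching the computation on the tree.

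Combining the invariance identity $(A_n)_{ii}=\langle\psi_i^{(n)},\langle A_n\rangle_T\psi_i^{(n)}\rangle$ with Cauchy--Schwarz and $\sum_i\Pi_i=I$ gives
\[
\frac{1}{n}\sum_i\bigl|(A_n)_{ii}\bigr|^2=\frac{1}{n}\sum_i\bigl|\langle\psi_i^{(n)},\langle A_n\rangle_T\psi_i^{(n)}\rangle\bigr|^2\le\frac{1}{n}\,\mathrm{Tr}\bigl(\langle A_n\rangle_T^{*}\langle A_n\rangle_T\bigr),
\]
and the right-hand side is, to leading order, $\tfrac1n\|\langle a_n\rangle_T\|_2^2$, the normalized $\ell^2$-mass of the averaged function. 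Now \emph{EXP} enters as the quantitative ergodicity input: since $\max\{\lambda_2,|\lambda_n|\}\le(1-\epsilon)d$, the averaging operator contracts the mean-zero $a_n$ on $\mathbf 1^{\perp}$ at a rate uniform in $n$, so that $\tfrac1n\|\langle a_n\rangle_T\|_2^2\le C(T)\,\|a_n\|_\infty^2$ with $C(T)\to0$ as $T\to\infty$ (the discrete analogue of the von Neumann mean ergodic theorem forcing $\langle a_n\rangle_T\to\bar a_n=0$). Taking $n\to\infty$ first, to erase the Egorov and tree-approximation errors, and then $T\to\infty$ yields the claim.

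I expect the main obstacle to be the Egorov/averaging step and its uniformity. The commutator errors from $A_n$ failing to commute with $\A_n$ must be shown to vanish after averaging; the propagator $U_t$ and the spherical-function calculus are only well behaved on the tempered band, and they degenerate as $\lambda$ approaches the band edges $\pm2\sqrt{d-1}$, so the inversion weights blow up there. Handling this requires a truncation to $[-2\sqrt{d-1}+\delta,2\sqrt{d-1}-\delta]$, using that Kesten--McKay has no atoms to bound the discarded fraction by a quantity tending to $0$ with $\delta$, and then interchanging the limits $n\to\infty$, $T\to\infty$, $\delta\to0$ in the correct order. Separating the two hypotheses cleanly in this scheme---\emph{BST} for the local model and the spectral support, \emph{EXP} for the uniform mixing---is the delicate heart of the argument.
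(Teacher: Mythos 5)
The first thing to note is that the paper contains no proof of this statement: Theorem~\ref{thm:qe} is quoted verbatim as Theorem~1 of \cite{anantharaman2015quantum} and used as a black box throughout. So your proposal can only be compared against the proof in \cite{anantharaman2015quantum} (and the later alternative proofs in \cite{anantharaman2017quantum, brooks2016quantum}), not against anything in this paper.

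Measured against those, your outline has the right architecture — quantum variance, a time average by a propagator that is a function of $\mathcal{A}_n$, BST to justify tree computations and to discard the $o(n)$ untempered eigenvalues via Kesten--McKay, EXP as the mixing input — but the two steps where essentially all of the work lies are asserted rather than proven, and one of them is false as literally stated. The inequality
\[
\frac1n\sum_i\bigl|(A_n)_{ii}\bigr|^2\le\frac1n\,\mathrm{Tr}\bigl(\langle A_n\rangle_T^{*}\langle A_n\rangle_T\bigr)
\]
is correct but by itself carries no content: by unitary invariance of the Hilbert--Schmidt norm and the triangle inequality, $\|\langle A_n\rangle_T\|_{\mathrm{HS}}\le\|A_n\|_{\mathrm{HS}}$, so without further input the right-hand side is only bounded by $\frac1n\|a_n\|_2^2$, which can equal $1$. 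All of the gain must come from your next claim, that $\langle A_n\rangle_T$ is ``to leading order'' multiplication by the averaged function $\langle a_n\rangle_T$ on the vertex set. That is not true: a graph has no pointwise geodesic flow on vertices, and conjugating a vertex-multiplication operator by $U_t$ produces an operator whose kernel is spread over pairs of vertices at distance up to roughly $2t$; it is not close in Hilbert--Schmidt norm to any multiplication operator, and the cancellation among the conjugates $U_t^{*}A_nU_t$ is exactly what has to be established. The actual content of the discrete Egorov theorem in \cite{anantharaman2015quantum} is to replace the vertex set by a phase space of directed edges (non-backtracking paths), to build a quantization in which Hilbert--Schmidt norms of operators are comparable to $\ell^2$ norms of symbols, and to prove the intertwining with errors controlled by BST. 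Likewise, your EXP step invokes ``the averaging operator contracts mean-zero $a_n$,'' but the relevant transport is the non-backtracking flow, not the random walk; relating its time averages to the spectral gap of $\mathcal{A}_n$ requires expressing spherical averages as Chebyshev-type polynomials $p_t(\mathcal{A}_n)$ and bounding $\sup|p_t(\lambda)|$ over the spectrum away from $\pm d$, a computation you do not carry out. You flag both points yourself as ``the main obstacle,'' which is accurate — but they are not obstacles to be smoothed over later; they are the proof. What you have is a faithful road map of the known argument, not a proof of the theorem.
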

A series of eigenvectors that satisfies (\ref{eq:QE}) is called \emph{quantum ergodic}. In fact, the theorem can be generalized to more general operators $a_n$ than given, but the above formulation is sufficient for our purposes. 

 Anantharaman and Le Masson suggest that the EXP condition is analogous to the requirement of ergodicity in Shnirelman's Theorem. Therefore, it is natural to wonder whether EXP alone is sufficient to necessitate quantum ergodicity, as no other assumptions are made in Shnirelman's Theorem. However, we show that this is not the case.

\begin{thm}\label{thm:mainexp}
There is an infinite family of graphs $(G'_n)$ satisfying EXP that have a family of orthonormal eigenbases of the adjacency operators $(\psi_1^{(n)},\ldots, \psi_n^{(n)})$ that violates quantum ergodicity. Specifically, there is a series of functions $a_n:V_n\rightarrow \R_n$, $\|a_n\|_\infty\leq 1$, $\sum_{v\in V_n} a_n(v)=0$ such that for each $n$,

\[
\frac{1}{n}\sum_{i=1}^{n} |\langle \psi_i^{(n)},a_n\psi_i^{(n)}\rangle|^2= 1/2.
\]
\end{thm}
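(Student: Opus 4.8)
The plan is to construct the family $(G'_n)$ so that it retains expansion but has a completely transparent eigenbasis for which the quantum ergodicity sum can be computed exactly, yielding the constant $1/2$. The natural construction is to take two disjoint copies of a good expander—say $G'_n = H_n \sqcup H_n$ where $H_n$ is a $d$-regular Ramanujan-type expander on $n/2$ vertices—and define $a_n$ to be $+1$ on one copy and $-1$ on the other. This immediately gives $\sum_v a_n(v) = 0$ and $\|a_n\|_\infty \le 1$. The point of using two identical copies is that the spectrum of $G'_n$ is the spectrum of $H_n$ taken with multiplicity two, so every eigenvalue has even multiplicity, and within each such pair I have complete freedom to choose the eigenbasis.

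The key idea is to exploit this freedom in choosing the eigenbasis. If $\phi$ is an eigenvector of $H_n$ supported on the first copy (extended by zero to the second) and $\phi'$ is the same eigenvector placed on the second copy, then any orthonormal rotation $\psi = \cos\theta\,\phi + \sin\theta\,\phi'$, $\psi^\perp = -\sin\theta\,\phi + \cos\theta\,\phi'$ is again an orthonormal eigenpair. The plan is to choose the eigenbasis to consist entirely of the symmetric and antisymmetric combinations with $\theta = \pi/4$, i.e. $\psi_{\pm} = \tfrac{1}{\sqrt 2}(\phi \pm \phi')$. For such a vector one computes
\begin{equation*}
\langle \psi_{\pm}, a_n \psi_{\pm}\rangle = \tfrac{1}{2}\sum_{v}\left(|\phi(v)|^2 - |\phi'(v)|^2\right) = 0,
\end{equation*}
which is the wrong answer—it gives a quantum ergodic basis. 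So instead I take $\theta = 0$, i.e. the eigenbasis in which every eigenvector is supported entirely on a single copy. Then $\langle \psi_i, a_n \psi_i\rangle = \pm 1$ for every $i$, since $|\psi_i|^2$ is a probability distribution supported on one copy where $a_n$ is constantly $\pm 1$. Hence each summand is $1$, and the average is $1$, not $1/2$.

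To land exactly on $1/2$, I need half the basis vectors to evaluate to $0$ and half to $\pm 1$, or all of them to evaluate to $\pm 1/\sqrt 2$. The clean route is to split the eigenvalue pairs: use the localized basis (one copy each) for half the eigenvalues and the symmetric/antisymmetric basis for the other half. A cleaner and more robust construction is to work with a single connected graph rather than a disjoint union; I would instead let $G'_n$ be the Cartesian product of $H_n$ with the single edge $K_2$, so that eigenvectors factor as (eigenvector of $H_n$) $\otimes$ ($\pm$ vector on $K_2$), and set $a_n(v,1) = +1$, $a_n(v,2) = -1$. Then for a product eigenvector $\psi = \eta \otimes \tfrac{1}{\sqrt2}(1,1)$ one gets $\langle\psi,a_n\psi\rangle = 0$, whereas the tensor structure lets me rotate in each two-dimensional fiber to tune the average. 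The main obstacle is verifying that $G'_n$ genuinely satisfies EXP: adding the $K_2$ factor shifts the spectrum of $H_n$ by $\pm 1$, so I must check that the top eigenvalue of $G'_n$ is $d+1$ and confirm that the spectral gap of $H_n$ survives the shift, giving $\max\{\lambda_2, |\lambda_n|\} \le (1-\epsilon')(d+1)$ for some fixed $\epsilon' > 0$; this is a direct computation from the eigenvalues $\lambda_i(H_n) \pm 1$, and the only care needed is ensuring the expander $H_n$ is chosen with a uniform gap (Ramanujan graphs suffice) so $\epsilon'$ does not degrade with $n$. The eigenbasis choice that forces the average to be exactly $1/2$ is then the remaining bookkeeping step, achieved by using the rotated fiber basis $\tfrac{1}{\sqrt2}(\eta\otimes(1,1) \pm \eta\otimes(1,-1))$ selectively so that exactly the right fraction of summands equals $1$.
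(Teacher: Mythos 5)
Your proposal contains a genuine gap at its central step, in both of the constructions you consider. The disjoint union $G'_n = H_n \sqcup H_n$ does give you the eigenbasis freedom you want (each eigenvalue of $H_n$ appears twice, so you can localize eigenvectors on one copy), but it fails EXP outright: a disconnected $d$-regular graph has $\lambda_2 = d$, since the indicator of each copy is an eigenvector of eigenvalue $d$. This is not a matter of cleanliness; the construction cannot prove the theorem, and the tension here is exactly the point --- the degeneracy you are exploiting for localization is the same degeneracy that kills the spectral gap. Your replacement construction $G'_n = H_n \square K_2$ repairs EXP (your gap computation $(1-\epsilon)d+1 \le (1-\epsilon')(d+1)$ is fine) but destroys the eigenbasis freedom: the two fiber vectors $\eta\otimes\tfrac{1}{\sqrt2}(1,1)$ and $\eta\otimes\tfrac{1}{\sqrt2}(1,-1)$ are eigenvectors of \emph{different} eigenvalues, $\lambda+1$ and $\lambda-1$, so the ``rotated fiber basis'' $\tfrac{1}{\sqrt2}\bigl(\eta\otimes(1,1)\pm\eta\otimes(1,-1)\bigr)$ consists of vectors that are not eigenvectors at all. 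For a generic expander $H_n$ (simple spectrum, no two eigenvalues differing by exactly $2$) the eigenbasis of $H_n\square K_2$ is forced to be the product basis, and every product vector gives $\langle\psi,a_n\psi\rangle=0$ for your $a_n$, so the quantum ergodicity sum is $0$, not $1/2$. The ``remaining bookkeeping step'' you defer is precisely where the proof breaks.

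The missing idea is that the second factor must itself have a \emph{degenerate} eigenvalue whose eigenspace contains localized eigenvectors aligned with the sign pattern of $a_n$. The paper achieves this by taking the product with $C_4$ (which equals $K_2\square K_2$, so in effect you needed one more $K_2$ factor): its eigenvalue $0$ has multiplicity $2$, and within that eigenspace one may choose the localized basis $(\tfrac{1}{\sqrt2},0,-\tfrac{1}{\sqrt2},0)$ and $(0,\tfrac{1}{\sqrt2},0,-\tfrac{1}{\sqrt2})$, supported respectively on the vertices where the alternating test function $a_n$ equals $+1$ and $-1$. The resulting product eigenvectors $\psi_{\phi,2},\psi_{\phi,3}$ satisfy $|\langle\psi,a_n\psi\rangle|=1$, while those built from the eigenvalue $\pm2$ eigenvectors of $C_4$ give $0$; since exactly half the basis falls in each class, the average is exactly $1/2$, and EXP survives because the spectrum of $G_n\square C_4$ is $\{\lambda_i+\mu\}$ with $\mu\in\{2,0,0,-2\}$. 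Your instinct (tensor structure plus fiberwise freedom) is the right one, but the freedom only exists inside a single eigenspace, which forces a factor graph with multiplicity, not $K_2$.
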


The family of graphs in Theorem \ref{thm:mainexp} is $(G_n\square C_4)$ for any family of graphs $(G_n)$ that satisfies EXP. Here $G_1\square G_2$ denotes the Cartesian product of graphs $G_1$ and $G_2$, and $C_4$ is the cycle graph of length 4. Intuitively, EXP measures expansion at global scales, whereas the Cartesian product creates a pattern on a local scale that causes localization of eigenvectors. The Cartesian product is particularly useful because of the explicit formula of its eigenvectors based on the eigenvectors of the two original graphs. Therefore, because $C_4$ has an eigenbasis with localized eigenvectors, the series of graphs $(G_n\square C_4)$ all have many localized eigenvectors. In fact, $C_4$ can be replaced with any graph with an adjacency operator with localized eigenvectors. Moreover, $(G_n\square C_4)$ satisfies EXP because of the relationship between eigenvalues of the adjacency operator of the Cartesian product with those of the adjacency operators of the original graphs. For the various properties of the Cartesian product, see Section 2.3 of Cvetkovi{\'c}, Rowlinson, and Simic \cite{cvetkovic1997eigenspaces}. 

\begin{figure}
    \centering
    \includegraphics[width=16cm]{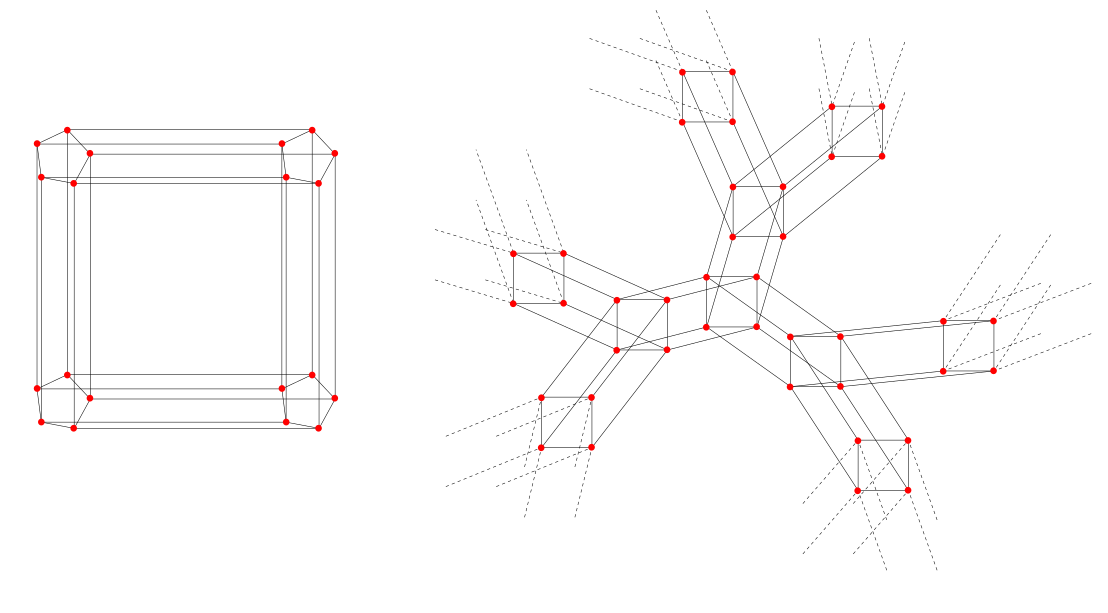}
    \caption{$C_4\square C_5$ and a portion of $T_3\square C_4$. The Cartesian product $G_1\square G_2$ can be thought of as replacing each vertex of $G_1$ with a copy of $G_2$. Note that $G_1\square G_2\cong G_2\square G_1$.}
    \label{fig:bslimit}
\end{figure}

Considering we cannot fully remove the requirement of BST, we then try to relax it. In order to necessitate quantum ergodicity in Schr\"odinger operators \cite{anantharaman2019quantum} and quantum graphs \cite{anantharaman2021quantum}), along with requirements similar to EXP and BST, an extra requirement is added that the imaginary part of the entries of the Green's function of the Benjamini Schramm limit is bounded for all $z\in \C^+$, where $\C^+$ is the upper half of the complex plane. For the adjacency operator, this property is a generalization of BST, as the Green's function of the infinite tree is known to be have bounded imaginary part for all $z\in \C^+$ (see \cite{warzel2013resonant} for a proof). Therefore we asked whether we could relax BST to a condition bounding the imaginary part of entries of the Green's function of the Benjamini-Schramm limit.

The first step is to calculate the Benjamini Schramm limit of $(G_n\square X)$. Of course, by Theorem \ref{thm:qe}, $(G_n\square C_4)$ cannot satisfy BST. In fact, the Cartesian product creates many cycles at every vertex. We show that the Benjamini-Schramm limit commutes with the Cartesian product, as for any graph $X$, the sequence of graphs $(G_n\square X)$ converges to the Cartesian product of the Benjamini-Schramm limit of $(G_n)$ with $X$. Therefore if our family of $d$-reguler graphs $(G_n)$ satisfies BST, then the Benjamini-Schramm limit of $(G_n\square C_4)$ is $T_d\square C_4$.

Examining the entries of the Green's function in our example, we show that that for an infinite graph $G_1$ and finite $G_2$, the Green's function of $G_1\square G_2$ follows the pattern of the spectrum of the Cartesian product of finite graphs. Namely, we prove the following, which could be of independent interest. Here $\G_G^z$ denotes the Green's function of $G$ at $z$.

  \begin{thm}\label{thm:green}
  Consider a (potentially infinite) graph $G_1$ and a finite graph $G_2$ with adjacency operators $\A_1$ and $\A_2$, respectively. Let $\psi_1,\ldots \psi_k$ be an orthonormal eigenbasis of $\A_2$ with eigenvalues $\lambda_1,\ldots, \lambda_k$.
  
We have

\[
\G_{G_1\square G_2}^{z}=\sum_{i=1}^{k} \G_{G_1}^{z-\lambda_i}\otimes \psi_i\psi_i^T.
\]
\end{thm}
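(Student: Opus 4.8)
The plan is to reduce everything to the additive behavior of the adjacency operator under the Cartesian product together with the spectral decomposition of the finite factor. Identifying the vertex set of $G_1\square G_2$ with $V_1\times V_2$ and the Hilbert space with $\ell^2(V_1)\otimes \ell^2(V_2)$, one has the standard identity $\A_{G_1\square G_2}=\A_1\otimes \mathrm{Id}+\mathrm{Id}\otimes \A_2$. Since $\psi_1,\dots,\psi_k$ is an orthonormal eigenbasis of $\A_2$, we also have the spectral decomposition $\A_2=\sum_{i=1}^k \lambda_i\,\psi_i\psi_i^T$ together with the completeness relation $\sum_{i=1}^k \psi_i\psi_i^T=\mathrm{Id}$.

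First I would insert this resolution of the identity into $z\,\mathrm{Id}-\A_{G_1\square G_2}$. Writing $\mathrm{Id}\otimes\mathrm{Id}=\sum_i \mathrm{Id}\otimes\psi_i\psi_i^T$, $\A_1\otimes\mathrm{Id}=\sum_i \A_1\otimes\psi_i\psi_i^T$, and $\mathrm{Id}\otimes \A_2=\sum_i \lambda_i\,\mathrm{Id}\otimes\psi_i\psi_i^T$, the operator collapses into a single sum of elementary tensors,
\[
z\,\mathrm{Id}-\A_{G_1\square G_2}=\sum_{i=1}^k\bigl((z-\lambda_i)\mathrm{Id}-\A_1\bigr)\otimes\psi_i\psi_i^T.
\]
Conceptually this is just the statement that conjugating by the unitary that diagonalizes the finite factor block-diagonalizes $\A_{G_1\square G_2}$ into the direct sum $\bigoplus_{i}(\A_1+\lambda_i\mathrm{Id})$, so that its resolvent is the direct sum of the resolvents of the shifted blocks.

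Next I would verify that the claimed expression is the inverse by multiplying it against the display above and using $(A\otimes B)(C\otimes D)=AC\otimes BD$. The cross terms vanish because $\psi_i^T\psi_j=\delta_{ij}$, so that $\psi_i\psi_i^T\psi_j\psi_j^T=\delta_{ij}\psi_i\psi_i^T$, leaving only the diagonal terms $\bigl((z-\lambda_i)\mathrm{Id}-\A_1\bigr)\G_{G_1}^{z-\lambda_i}\otimes\psi_i\psi_i^T$. Each bracket is $\mathrm{Id}$ by the definition $\G_{G_1}^{z-\lambda_i}=\bigl((z-\lambda_i)\mathrm{Id}-\A_1\bigr)^{-1}$, so the product telescopes to $\sum_i \mathrm{Id}\otimes\psi_i\psi_i^T=\mathrm{Id}\otimes\mathrm{Id}$; the same computation in the opposite order supplies the left inverse.

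The one point requiring care, and the main obstacle when $G_1$ is infinite, is the well-definedness of the objects being manipulated. I would note that $\A_1$ is self-adjoint (and bounded when $G_1$ has bounded degree), so that the resolvent $\G_{G_1}^{z-\lambda_i}$ exists as a bounded operator precisely when $z-\lambda_i$ lies outside the spectrum of $\A_1$; since each $\lambda_i$ is real, this holds for every $z\in\C^+$, which is the regime of interest. Crucially, because $G_2$ is finite the sum over $i$ has finitely many terms, so the spectral resolution of $\A_2$ introduces no convergence or measure-theoretic subtleties and all of the tensor-algebra identities above are literal finite-sum identities on $\ell^2(V_1)\otimes\C^k$. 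This reduces the potentially infinite-dimensional claim to the finite bookkeeping carried out in the two preceding paragraphs.
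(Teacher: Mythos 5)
Your proof is correct and takes essentially the same route as the paper's: both verify directly that the proposed sum is the resolvent by applying $\A_1\otimes I+I\otimes \A_2-z$ to it, using $\A_2\psi_i\psi_i^T=\lambda_i\psi_i\psi_i^T$ (the spectral decomposition of $\A_2$) together with the completeness relation $\sum_{i=1}^k\psi_i\psi_i^T=I$, the finiteness of $G_2$ making all sums finite. The only differences are cosmetic: your sign convention $\G^w=(w-\A)^{-1}$ versus the paper's defining relation $(\A-w)\G^w=I$, which cancels on both sides of the identity, and your additional (welcome but not strictly necessary) checks of the two-sided inverse and of well-definedness for $z\in\C^+$.
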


Therefore the entries of the Green's function of $T_d\square C_4$ can be written as a linear combination of entries of Green's functions on $T_d$. As this latter quantity has bounded imaginary part everywhere, $\G_{T_d\square C_4}^z$ also has bounded imaginary part. This means that by taking $(G_n)$ to satisfy both EXP and BST, the family of graphs $(G_n\square C_4)$ satisfies EXP and has bounded imaginary part of the Green's function in the Benjamini-Schramm limit, but nevertheless by Theorem \ref{thm:mainexp} it violates quantum ergodicity. Therefore, in general, BST cannot be generalized to the requirement of having bounded imaginary part of the Green's function.

\begin{figure}
    \centering
    \includegraphics[width=9cm]{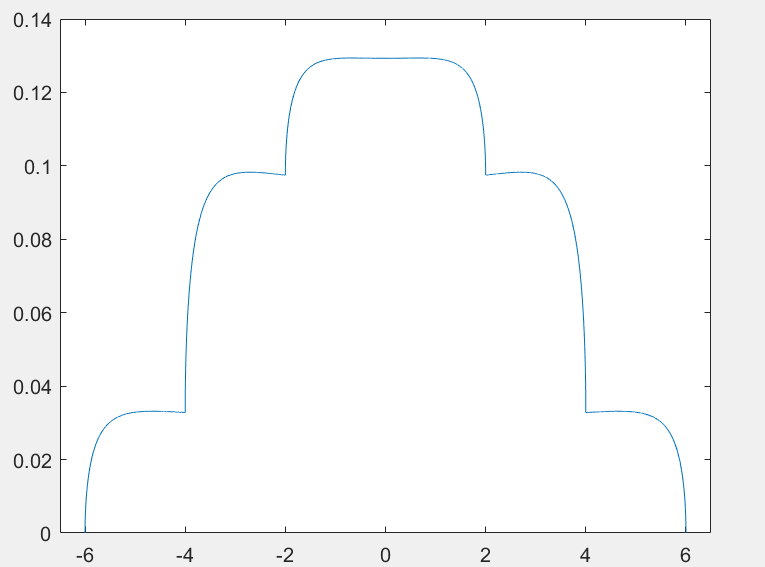}
    \caption{A plot of the spectral density of $T_5\square C_4$. It is the sum of the Kesten-McKay measure shifted by the different eigenvalues of $\A_{C_4}$.}
    \label{fig:my_label}
\end{figure}
We end the paper with Section \ref{sec:onlybst}, which shows that BST by itself is not sufficient to necessitate quantum ergodicity.
\begin{thm}\label{thm:mainbst}
There is an infinite family of graphs $(H_n)$ satisfying BST that have a family of orthonormal eigenbases of the adjacency operators $(\psi_1^{(n)},\ldots, \psi_n^{(n)})$ that violates quantum ergodicity. Specifically, there is a series of functions $a_n:V_n\rightarrow \R_n$, $\|a_n\|_\infty\leq 1$, $\sum_{v\in V_n} a_n(v)=0$ such that for each $n$,

\[
 \frac{1}{n}\sum_{i=1}^{n} |\langle \psi_i^{(n)},a_n\psi_i^{(n)}\rangle|^2\geq 1/d.
\]
\end{thm}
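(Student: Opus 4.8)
The plan is to exploit the fact that Theorem~\ref{thm:qe} requires \emph{both} EXP and BST: to show that BST alone is insufficient, it suffices to produce a $d$-regular family satisfying BST but violating EXP which admits a badly localized eigenbasis. The cleanest way to destroy EXP while leaving the local (tree-like) geometry untouched is to use a \emph{disconnected} graph, since the definitions of BST and EXP impose no connectivity. Concretely, I would let $(G_m)$ be any family of connected $d$-regular graphs with girth tending to infinity (such families exist by classical constructions), and set $H_n := G_m \sqcup G_m$ with $n = 2m$.

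First I would check the two structural conditions. Since the copies are vertex-disjoint, for every vertex $v$ of $H_n$ and every radius $R$ the ball $B_R(v)$ in $H_n$ is isometric to the corresponding ball in the copy of $G_m$ containing $v$, so the injectivity radius of $v$ in $H_n$ equals its injectivity radius in $G_m$. As $(G_m)$ satisfies BST (high girth forces every injectivity radius to grow), so does $(H_n)$. On the other hand $\A_{H_n}=\A_{G_m}\oplus\A_{G_m}$, so the Perron eigenvalue $d$ occurs with multiplicity at least two and $\lambda_1(H_n)=\lambda_2(H_n)=d$; thus $(H_n)$ violates EXP, which is exactly what places it outside the hypotheses of Theorem~\ref{thm:qe}.

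Next I would build the localized eigenbasis and the observable. Because $\A_{H_n}$ is block-diagonal, each eigenspace of $H_n$ splits as a direct sum of an eigenspace on the first copy and one on the second; fixing an orthonormal eigenbasis $(\phi_1,\dots,\phi_m)$ of $\A_{G_m}$, the vectors $\{(\phi_j,0)\}_j \cup \{(0,\phi_j)\}_j$ form an orthonormal eigenbasis of $\A_{H_n}$, each supported on a single copy. Let $a_n$ equal $+1$ on the first copy and $-1$ on the second; then $\sum_{v} a_n(v)=0$ and $\|a_n\|_\infty = 1$. For a unit eigenvector $\psi_i^{(n)}$ supported on one copy we get $\langle \psi_i^{(n)}, a_n \psi_i^{(n)}\rangle = \pm\|\psi_i^{(n)}\|^2 = \pm 1$, so every summand equals $1$ and $\frac{1}{n}\sum_{i} |\langle \psi_i^{(n)}, a_n \psi_i^{(n)}\rangle|^2 = 1 \ge 1/d$, establishing the claim with room to spare.

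The construction is essentially free; the real content --- and the point where any difficulty lives --- is the recognition that BST tolerates disconnection whereas EXP does not, so that a two-component graph lands precisely outside Theorem~\ref{thm:qe} while retaining a tree-like Benjamini-Schramm limit, and that the resulting doubled multiplicity of every eigenvalue is what licenses a fully localized choice of eigenbasis. If one instead insisted on \emph{connected} $H_n$, this freedom would disappear: a single sparse cut makes $\lambda_2$ approach $d$ but localizes only $O(1)$ eigenvectors, giving a vanishing average, so one would need a connected $d$-regular BST family with many degenerate eigenvalues whose eigenspaces contain small-support vectors --- a genuinely harder task. I expect the stated bound $1/d$ (rather than $1$) reflects such a refined construction, but the disconnected version above already suffices to prove the theorem.
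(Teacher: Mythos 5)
Your construction is correct as a proof of the literal statement: the paper's definitions of BST and of quantum ergodicity impose no connectivity, a disjoint union $G_m \sqcup G_m$ of two high-girth $d$-regular graphs satisfies BST, and your localized eigenbasis together with the $\pm 1$ observable gives average exactly $1 \geq 1/d$. However, this is a genuinely different route from the paper, and the difference is the whole content of the paper's section. The paper builds a \emph{connected} example: take $d/2$ copies of $F_n$ with one edge deleted, and join all of them through a single new vertex $v_n$ attached to the $d$ degree-$(d-1)$ endpoints. Because the copies are identical, every eigenvector $\phi$ of $\A_{F'_n}$ yields an exact eigenvector of $\A_{H_n}$ that is $\phi/\sqrt{2}$ on one copy, $-\phi/\sqrt{2}$ on a second copy, and $0$ elsewhere (the antisymmetry makes the eigenvalue equation at $v_n$ hold trivially), so one gets $n$ exactly localized eigenvectors out of $\frac{d}{2}n+1$ --- whence the $1/d$ bound. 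What each approach buys: yours is maximally simple and gives the constant $1$, but it is open to the objection that a disconnected counterexample is degenerate --- EXP already forces connectivity, so the interesting question is whether a \emph{connected} BST family can fail quantum ergodicity, and under that reading your example proves nothing; the paper's construction survives precisely that objection. Note also that your closing heuristic --- that a single sparse cut in a connected graph can localize only $O(1)$ eigenvectors --- is what the paper's symmetry trick refutes: gluing \emph{identical} copies at one vertex forces massive eigenvalue degeneracy, and the eigenspaces contain the small-support antisymmetric vectors you correctly guessed would be needed. So your instinct about what a connected construction requires was right; the paper shows it is not a ``genuinely harder task'' but follows from one deleted edge and a cut vertex.
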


As EXP measures global expansion, BST measures local expansion, so our example creates localization by creating patterns on a global scale. Our construction is similar to those of \cite{ganguly2021non, alon2019high, mckenzie2020high} in that we take a set of high girth graphs and connect them in such a way that creates a geometric phenomenon without destroying girth.

\subsection{Related work}
\paragraph{Eigenvectors of random regular graphs.}
The shape of eigenvectors of random regular graphs has received considerable attention. Bauerschmidt, Huang, and Yao and then Huang and Yao \cite{bauerschmidt2019local, huang2021spectrum} proved bounds on the entries of the Green's function that hold with high probability. These are sharp enough to in turn imply a $\polylog(n)/\sqrt{n}$ infinity norm bound on any normalized eigenvector, as well as show quantum \emph{unique} ergodicity, a notion of localization stronger than quantum ergodicity. In another vein, Backhausz and Szegedy proved that the distribution of entries of an eigenvector of a random regular graph must approximate a Gaussian distribution by showing that the only typical eigenvector processes are Gaussian waves \cite{backhausz2019almost} (note that this does not necessitate delocalization, as the distribution of entries of a localized vector is close to a Gaussian of variance 0).

\paragraph{Other results in graph quantum ergodicity.}
For an overview of results before 2019, see \cite{anantharaman2019recent}. Since the original proof of quantum ergodicity, Anantharaman \cite{anantharaman2017quantum} and Brooks, Le Masson, and Lindenstrauss \cite{brooks2016quantum} have given alternate proofs of Theorem \ref{thm:qe}. Quantum ergodicity statements have since then been found for a variety of graphical models, including quantum graphs \cite{ingremeau2020quantum, anantharaman2021quantum} and the Anderson model on the Bethe lattice \cite{anantharaman2017quantum2}.

\paragraph{Other eigenvector delocalization results.}
Ganguly and Srivastava built upon the result of Brooks and Lindenstrauss to achieve a relation between local expansion and delocalization that is essentially tight \cite{ganguly2021non}. Alon, Ganguly and Srivastava then further examined the example showing tightness, increasing the lower bound on girth and showing it is a good spectral expander \cite{alon2019high}.

\paragraph{Nodal domains of random graphs.}
Another line of research has asked about the shape of the subgraph created by taking an eigenvector $f$ and deleting all edges $(u,v)$ such that $f(u)f(v)<0$. The resulting connected components are called nodal domains. Dekel, Lee, and Linial proved that for fixed $p$, an eigenvector of a $G(n,p)$ random graph has almost all its vertices contained in two large nodal domains w.h.p. \cite{dekel2007eigenvectors}.  Arora and Bhaskara then showed that for $p\geq n^{-1/20}$ there are exactly 2 nodal domains  \cite{arora2011eigenvectors}. Huang and Rudelson proved that these domains are approximately the same size \cite{huang2020size}.

\paragraph{Green's function of the Cartesian product.} 
Chung and Yau, then Ellis \cite{chung2000discrete, ellis2003discrete} proved that the entries of the Green's function of a Cartesian product can be expressed as a contour integral of a function of Green's functions on the two original graphs, assuming that both graphs are finite.

\section{EXP is not sufficient for quantum ergodicity}\label{sec:onlyexp}

 For a graph $G=(V,E)$, we will use $|G|$ to denote the number of vertices $|V|$. $I_n$ refers to the identity operator of dimension $n$. Unless otherwise specified, for a graph $G$, we will use the notation $G=(V_G,E_G)$ with adjacency operator $\A_G$.  For $u,v\in V$, we will write $u\sim v$ to signify $(u,v)\in E$.

\begin{dfn}
Consider two graphs $G_1=(V_1,E_1)$ and $G_2=(V_2,E_2)$. The \emph{Cartesian product} $G_1\square G_2$ of the graphs $G_1$ and $G_2$ is defined as follows. $G_1\square G_2$ has vertex set $V_1\times V_2$, and for $(u_1,u_2),(v_1,v_2)\in V_1\times V_2$, $(u_1,u_2)\sim (v_1,v_2)$ if and only if either
\begin{enumerate}
    \item $u_1\sim v_1$ in $G_1$ and $u_2=v_2$ or
    \item $u_1=v_1$ and $u_2\sim v_2$ in $G_2$.
\end{enumerate}
\end{dfn}
An equivalent characterization is that if $G_1$ and $G_2$ have adjacency operations $\A_1$ and $\A_2$ respectively, then $G_1\square G_2$ is the graph with adjacency operator $\A_1\tensor I_{|G_2|}+I_{|G_1|}\tensor \A_2$.

Note that the Cartesian product is well defined for locally finite graphs (graphs where each vertex has finite degree), even if the graphs have an infinite number of vertices. However, in this section, we assume that all graphs are finite.

Given $\phi_i:V_i\rightarrow \R$ for $i\in\{1,2\}$, we define $\psi_{\phi_1,\phi_2}:V_1\times V_2\rightarrow \R$ to be 
\begin{equation}\label{eq:cartesianeig}
\psi_{\phi_1,\phi_2}(u_1,u_2):=\phi_1(u_1)\cdot\phi_2(u_2).\end{equation} The key property of Cartesian products we use is the following:

\begin{fact}\label{fact:cartesian}
If $\phi_1$ is an eigenvector of $\A_1$ of eigenvalue $\lambda_1$ and $\phi_2$ is an eigenvector of $\A_2$ with eigenvalue $\lambda_2$, then $\psi_{\phi_1,\phi_2}$ is an eigenvector of $\A_{G_1\square G_2}$ of eigenvalue $\lambda_1+\lambda_2$. 
\end{fact}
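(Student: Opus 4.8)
The plan is to verify directly that $\A_{G_1\square G_2}$ applied to $\psi_{\phi_1,\phi_2}$ returns $(\lambda_1+\lambda_2)\psi_{\phi_1,\phi_2}$, using the explicit adjacency-operator formula $\A_{G_1\square G_2}=\A_1\otimes I_{|G_2|}+I_{|G_1|}\otimes \A_2$ given just above the statement. The cleanest route is to observe that $\psi_{\phi_1,\phi_2}$ is precisely the tensor (Kronecker) product $\phi_1\otimes \phi_2$ under the identification of functions on $V_1\times V_2$ with the tensor product of the function spaces on $V_1$ and $V_2$; indeed, the defining equation~\eqref{eq:cartesianeig} says exactly that the $(u_1,u_2)$ entry is the product of the corresponding entries, which is the definition of the Kronecker product.

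Granting that identification, the computation is a one-line consequence of the mixed-product property of the tensor product. First I would write
\begin{align*}
\A_{G_1\square G_2}\,(\phi_1\otimes\phi_2)
&=\bigl(\A_1\otimes I_{|G_2|}+I_{|G_1|}\otimes \A_2\bigr)(\phi_1\otimes\phi_2)\\
&=(\A_1\phi_1)\otimes(I_{|G_2|}\phi_2)+(I_{|G_1|}\phi_1)\otimes(\A_2\phi_2).
\end{align*}
Then I would substitute the eigenvalue relations $\A_1\phi_1=\lambda_1\phi_1$ and $\A_2\phi_2=\lambda_2\phi_2$ and pull the scalars out of the tensor products, obtaining $\lambda_1(\phi_1\otimes\phi_2)+\lambda_2(\phi_1\otimes\phi_2)=(\lambda_1+\lambda_2)(\phi_1\otimes\phi_2)$. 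This exhibits $\psi_{\phi_1,\phi_2}$ as an eigenvector of eigenvalue $\lambda_1+\lambda_2$, as claimed.

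For completeness, and because a reader may prefer not to invoke tensor-product machinery, I would also sketch the purely combinatorial verification evaluating $(\A_{G_1\square G_2}\psi_{\phi_1,\phi_2})(u_1,u_2)$ at a fixed vertex. By the definition of the Cartesian product, the neighbors of $(u_1,u_2)$ split into two disjoint families: the pairs $(v_1,u_2)$ with $v_1\sim u_1$ in $G_1$, and the pairs $(u_1,v_2)$ with $v_2\sim u_2$ in $G_2$. Summing $\psi_{\phi_1,\phi_2}=\phi_1(\cdot)\phi_2(\cdot)$ over these neighbors factors as
\[
\phi_2(u_2)\sum_{v_1\sim u_1}\phi_1(v_1)+\phi_1(u_1)\sum_{v_2\sim u_2}\phi_2(v_2)
=\phi_2(u_2)(\A_1\phi_1)(u_1)+\phi_1(u_1)(\A_2\phi_2)(u_2),
\]
and the eigenvalue relations again collapse this to $(\lambda_1+\lambda_2)\phi_1(u_1)\phi_2(u_2)$.

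I do not anticipate a genuine obstacle here, since the result is a standard and essentially formal fact; the only point requiring mild care is making the identification $\psi_{\phi_1,\phi_2}=\phi_1\otimes\phi_2$ explicit so that the mixed-product rule $(\A\otimes B)(x\otimes y)=(\A x)\otimes(B y)$ applies cleanly, and keeping straight that the two summands in the adjacency operator act on different tensor factors. I would lead with the tensor-product argument for brevity and optionally include the combinatorial version as a remark, since both are short and neither involves any hard estimate.
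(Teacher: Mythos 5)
Your proof is correct. The paper never proves this Fact---it is stated as standard, with a pointer to Section 2.3 of Cvetkovi\'c--Rowlinson--Simi\'c---so there is no internal proof to compare against; your identification $\psi_{\phi_1,\phi_2}=\phi_1\otimes\phi_2$ followed by the mixed-product computation $(\A_1\otimes I_{|G_2|}+I_{|G_1|}\otimes \A_2)(\phi_1\otimes\phi_2)=(\lambda_1+\lambda_2)(\phi_1\otimes\phi_2)$ is exactly the canonical argument the paper implicitly relies on, and it is the same manipulation the paper itself performs later in the proof of Theorem \ref{thm:green}. Either of your two variants (the tensor-product one or the entrywise combinatorial one) suffices on its own; the tensor version is the natural companion to the paper's operator identity $\A_{G_1\square G_2}=\A_1\otimes I_{|G_2|}+I_{|G_1|}\otimes \A_2$.
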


Take $(G_n)$ as a family of $d$-regular graphs $(V_n,E_n)$ that satisfies EXP with a fixed parameter $\epsilon>0$ with adjacency operators $(\A_n)$. Let $C_4$ denote the cycle graph of length 4. $G_n\square C_4$ is a $d+2$ regular graph on $4n$ vertices.

\begin{prop}
The family of graphs $(G_n\square C_4)$ satisfies EXP but does not satisfy BST. 
\end{prop}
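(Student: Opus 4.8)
The plan is to prove the two claims separately. For the EXP part, the key tool is the eigenvalue formula from Fact~\ref{fact:cartesian} together with the stated identity $\A_{G_n\square C_4}=\A_n\tensor I_4+I_n\tensor \A_{C_4}$. The eigenvalues of $C_4$ are $\{2,0,0,-2\}$, so the spectrum of $\A_{G_n\square C_4}$ is precisely $\{\lambda + \mu : \lambda \in \mathrm{spec}(\A_n),\ \mu\in\{2,0,0,-2\}\}$. The top eigenvalue is $d+2$ (from $\lambda_1=d$ and $\mu=2$), matching the degree, which confirms the graph is $(d+2)$-regular and connected in the spectral sense. The second-largest eigenvalue is $\max\{\lambda_1+0,\ \lambda_2+2\}$, and the most negative is $\min\{\lambda_n-2,\ \lambda_1-2\}$ type terms; I would carefully enumerate which sums can approach $\pm(d+2)$.

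For the spectral gap, I would argue as follows. Every eigenvalue other than $d+2$ has the form $\lambda+\mu$ where either $\lambda\le\lambda_2\le(1-\epsilon)d$ (when $\mu=2$) or $\mu\le 0$ (when $\lambda=d$ or otherwise). The largest candidate below the top is bounded by $\max\{\lambda_2+2,\ d+0\}=\max\{(1-\epsilon)d+2,\ d\}$. Since we want $(1-\epsilon')(d+2)$ for some fixed $\epsilon'>0$, I would check that $d\le(1-\epsilon')(d+2)$ and $(1-\epsilon)d+2\le(1-\epsilon')(d+2)$ both hold for a suitable $\epsilon'$ depending only on $\epsilon$ (and $d$, which is fixed). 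For the bottom of the spectrum, the most negative eigenvalue is $\min\{\lambda_n-2,\ d-2,\dots\}$; using $|\lambda_n|\le(1-\epsilon)d$ gives $\lambda_n-2\ge-(1-\epsilon)d-2$, whose absolute value I would compare against $(1-\epsilon')(d+2)$. Choosing $\epsilon'=\epsilon d/(d+2)$ (or a slightly smaller explicit constant) should make $\max\{\lambda_2',|\lambda_n'|\}\le(1-\epsilon')(d+2)$ uniformly in $n$, establishing EXP with the fixed parameter $\epsilon'>0$.

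For the failure of BST, the cleanest route is to exhibit short cycles at every vertex, so that the injectivity radius is uniformly bounded and hence $|\{x : \rho(x)<R\}|/|V|$ does not tend to $0$ for small fixed $R$. Concretely, every vertex $(v,c)$ of $G_n\square C_4$ lies in a $4$-cycle: fixing $v$ and varying the $C_4$-coordinate traverses the copy of $C_4$, which is itself a $4$-cycle through $(v,c)$. Thus the ball of radius $2$ around any vertex contains a cycle and is not a tree, so $\rho(x)\le 2$ for every $x$. Taking $R=3$ (say), the set $\{x:\rho(x)<R\}$ is \emph{all} of $V_n\times C_4$, so the ratio equals $1$ for all $n$ and does not converge to $0$; this directly negates the BST condition.

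The main obstacle is purely bookkeeping in the EXP part: I must make sure the spectral gap constant $\epsilon'$ is bounded below by a quantity independent of $n$, and verify I have correctly identified the extremal sums $\lambda+\mu$ at both the top and bottom of the spectrum (in particular that no combination such as $\lambda_2+2$ or $\lambda_n-2$ sneaks above the claimed threshold). Because $d$ is a fixed constant and $\epsilon$ is fixed, this is routine once the case analysis is laid out, but it requires care to avoid an off-by-a-constant error that would make the gap vanish. The BST failure is comparatively immediate from the ubiquitous $4$-cycles.
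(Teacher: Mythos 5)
Your approach is essentially the paper's: for EXP you enumerate the spectrum of $G_n\square C_4$ as the sums $\lambda+\mu$ with $\lambda\in\mathrm{spec}(\A_n)$, $\mu\in\{2,0,0,-2\}$, and bound the second-largest and most-negative values; for BST you exhibit a $4$-cycle through every vertex, so the fraction of vertices with injectivity radius below a fixed $R$ is identically $1$. The BST half is correct and, if anything, slightly cleaner than the paper's: you use the fiber $\{v\}\square C_4$ as the $4$-cycle, whereas the paper uses the mixed cycle $(u_1,u_2)\sim(u_1,v_2)\sim(v_1,v_2)\sim(v_1,u_2)\sim(u_1,u_2)$; both work. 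One correction in the EXP half: your final proposed constant $\epsilon'=\epsilon d/(d+2)$ satisfies only the second of the two constraints you yourself wrote down. When $\epsilon d>2$ it violates $d\le(1-\epsilon')(d+2)$, i.e.\ the eigenvalue $d=\lambda_1+0$ then exceeds $(1-\epsilon')(d+2)$; for instance $d=10$, $\epsilon=1/2$ gives $\epsilon'=5/12$ and $(1-\epsilon')(d+2)=7<10$, and the shortfall is not ``slight.'' The correct choice is the minimum of your two bounds, $\epsilon'=\min\{\epsilon d,\,2\}/(d+2)$, which is exactly the parameter the paper states. Since you had already isolated both constraints before naming the constant, this is a one-line fix rather than a structural gap.
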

\begin{proof}
As the spectrum of $\A_{C_4}$ is $\{2, 0, 0, -2\}$, by Fact \ref{fact:cartesian}, the adjacency operator of $G_n\square C_4$ satisfies EXP with parameter $\frac{\min\{d\epsilon,2\}}{d+2}$, which is constant for constant $d$.

Given $(u_1,u_2)\in V_n\times V_{C_4}$, take $(v_1,v_2)$ such that $u_1\sim v_1$ in $G_n$ and $u_2\sim v_2$ in $C_4$. This defines a 4$-$cycle in $G_n\square C_4$, given by
\[(u_1,u_2)\sim (u_1,v_2)\sim (v_1,v_2)\sim (v_1,u_2)\sim (u_1,u_2).\]
By the regularity of $G_n$ and $C_4$, such a vertex pair $(v_1,v_2)$ will exist for any $(u_1,u_2)$ (in fact, a total of $2d$ such pairs will exist). Therefore, there is a cycle of length $4$ starting at any given vertex, so if $R\geq 2$, then $\frac{|\{(u_1,u_2)\in V_n\times V_{C_4}, \rho(x)<R\}|}{4n}=1$. This means $G_n\square C_4$ does not satisfy BST. 
\end{proof}
\begin{thm}[Implies Theorem \ref{thm:mainexp}]\label{thm:rs}
Each graph in the family $(G_n\square C_4)$ admits an eigendecomposition that violates quantum ergodicity.
\end{thm}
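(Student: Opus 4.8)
The plan is to build the desired eigenbasis of $G_n\square C_4$ out of the eigenbasis we are free to choose on the small factor $C_4$, using the explicit product formula of Fact~\ref{fact:cartesian}. Fix any orthonormal eigenbasis $\phi_1,\dots,\phi_n$ of $\A_{G_n}$ and any orthonormal eigenbasis $\chi_1,\dots,\chi_4$ of $\A_{C_4}$; then the $4n$ products $\psi_{\phi_i,\chi_j}$ are eigenvectors by Fact~\ref{fact:cartesian}, and since $\langle\psi_{\phi_i,\chi_j},\psi_{\phi_{i'},\chi_{j'}}\rangle=\langle\phi_i,\phi_{i'}\rangle\langle\chi_j,\chi_{j'}\rangle=\delta_{ii'}\delta_{jj'}$, they form an orthonormal eigenbasis. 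The crucial freedom is that the eigenvalue $0$ of $C_4$ has multiplicity two, so I may choose its eigenvectors to be the \emph{localized} pair $\chi_2=(1,0,-1,0)/\sqrt2$ and $\chi_3=(0,1,0,-1)/\sqrt2$, each supported on only two of the four vertices, rather than a delocalized pair. This localization on the factor is what propagates to localization on the product.

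Next I would pick the test function to see only the $C_4$ coordinate: set $a_n(u_1,u_2)=b(u_2)$, where $b=(1,-1,1,-1)$ on the four cyclically ordered vertices of $C_4$. This satisfies $\|a_n\|_\infty\le 1$ and $\sum_{v}a_n(v)=n\sum_{u_2}b(u_2)=0$, as required. The point of making $a_n$ depend only on the second coordinate is that, since $|\psi_{\phi_i,\chi_j}(u_1,u_2)|^2=\phi_i(u_1)^2\chi_j(u_2)^2$ and $\sum_{u_1}\phi_i(u_1)^2=1$, the diagonal matrix element collapses to a quantity independent of $i$:
\[
\langle \psi_{\phi_i,\chi_j},\,a_n\psi_{\phi_i,\chi_j}\rangle=\sum_{u_2}b(u_2)\,\chi_j(u_2)^2=:\beta_j.
\]

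The computation then reduces entirely to the four-vertex graph. I would evaluate $\beta_j=\langle\chi_j,b\chi_j\rangle$ for each eigenvector, finding $\beta_1=\beta_4=0$ because the $\pm2$ eigenvectors have uniform squared entries and $b$ sums to zero, while $\beta_2=1$ and $\beta_3=-1$ for the localized $0$-eigenvectors. Summing over the full basis then gives
\[
\frac{1}{4n}\sum_{i=1}^{n}\sum_{j=1}^{4}|\beta_j|^2=\frac14\sum_{j=1}^{4}|\beta_j|^2=\frac14(0+1+1+0)=\frac12,
\]
which holds exactly for every $n$, independently of $G_n$, and proves the claim.

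The real content — and the step I would be most careful about — is not any single estimate but the joint choice of the localized $0$-eigenbasis and the matching test function $b$. If instead one chose a delocalized basis of the $0$-eigenspace, such as $(1,1,-1,-1)/2$ and $(1,-1,-1,1)/2$, the identical computation yields $\beta_2=\beta_3=0$ and the quantum-ergodicity sum vanishes; so the violation is genuinely an artifact of a permitted but adversarial choice of eigenbasis, exactly as the statement (which only asserts that the product \emph{admits} a bad eigendecomposition) allows. Verifying that $\{\psi_{\phi_i,\chi_j}\}$ really is an orthonormal basis by the dimension count above together with Fact~\ref{fact:cartesian}, and that $b$ meets the two normalization constraints, are the remaining routine checks.
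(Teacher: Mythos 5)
Your proposal is correct and follows essentially the same route as the paper's proof: the same localized $0$-eigenvectors $(1,0,-1,0)/\sqrt2$, $(0,1,0,-1)/\sqrt2$ of $C_4$, the same alternating test function $b=(1,-1,1,-1)$ lifted through the second coordinate, and the same computation via Fact~\ref{fact:cartesian} yielding the constant value $\tfrac12$. Your added checks (orthonormality of the product basis, the collapse of $\langle\psi_{\phi_i,\chi_j},a_n\psi_{\phi_i,\chi_j}\rangle$ to a quantity independent of $\phi_i$, and the contrast with the delocalized choice of $0$-eigenbasis) are all correct and merely make explicit what the paper leaves implicit.
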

\begin{proof}

We order and label the four vertices of $V_{C_4}$ $\{1,2,3,4\}$. The localized eigenbasis of $\A_{C_4}$ we will use is given by the following table:

\[
    \begin{tabular}{c|c}
  eigenvector & eigenvalue\\
    \hline
       $(\frac12,\frac12,\frac12,\frac12)$ & 2 \\
       $(\frac1{\sqrt2},0,-\frac1{\sqrt{2}},0)$ & 0 \\
       $(0,\frac1{\sqrt2},0,-\frac1{\sqrt{2}})$ & 0 \\
       $(\frac12,-\frac12,\frac12,-\frac12)$ & -2
    \end{tabular}.\]

We define $a_{n}:(V_n\times V_{C_4})\rightarrow \R$,
\[
a_{n}(u_1,u_2)=\left\{\begin{array}{cc}
     1&  u_2\in \{1,3\}\\
     -1& u_2\in \{2,4\}.
\end{array}\right.
\]

Because $\sum_{V_n\times V_{C_4}}a_n(u_1,u_2)=0$ and $\|a_n\|_\infty=1$, $a_n$ satisfies the conditions of Theorem \ref{thm:qe}. Define $\psi_{\phi,i}$ as the eigenvector from (\ref{eq:cartesianeig}) corresponding to the normalized eigenvector $\phi$ of $\A_n$ and the $i$th eigenvector of $C_4$ given in the table. By Fact \ref{fact:cartesian}, for any $u\in V_n$,

\[
\psi_{\phi,2}(u,2)=\psi_{\phi,2}(u,4)=\psi_{\phi,3}(u,1)=\psi_{\phi,3}(u,3)=0.
\]

Therefore 
\[
|\langle \psi_{\phi,2},a_{4n}\psi_{\phi,2}\rangle |=|\langle \psi_{\phi,3},a_{4n}\psi_{\phi,3}\rangle|=1
\]
and 
\[
|\langle \psi_{\phi,1},a_{4n}\psi_{\phi,1}\rangle|=|\langle \psi_{\phi,4},a_{4n}\psi_{\phi,4}\rangle|=0.
\]

For any $n$,
\[
\frac{1}{4n}\sum_{\phi, i} |\langle \psi_{\phi,i},a_{4n}\psi_{\phi,i} \rangle|^2=\frac12,
\]
meaning this family of eigenbases is not quantum ergodic.
\end{proof}

\section{Green's function on the infinite Cartesian product}
\label{sec:gfconvergence}

Our goal in this section is to show the Benjamini-Schramm limit of $(G_n\square C_4)$ from Section \ref{sec:onlyexp} has bounded imaginary part of the Green's function under the added assumption that $(G_n)$ satisfies BST. Therefore the requirement of BST cannot be generalized to this looser requirement on the Benjamini-Schramm limit. This relies on the Cartesian product commuting with both the Benjamini Schramm limit and, in a sense, with the Green's function itself.

Take $\mu$ to be a measure over isomorphism classes of rooted graphs, and $X$ to be a finite graph. The measure $\mu_{\square X}$ is defined over the same space as $\mu$, such that for any set $\Gamma$ of isomorphism classes,
\[
\mu_{\square X}(\Gamma)=\frac{1}{|X|}\sum_{v\in V_X}\mu(\{(G,o):\exists (H,o')\in \Gamma \textnormal{ s.t. } (G\square X,(o,v))\cong (H,o') \}).
\]

\begin{prop}\label{prop:bs}
If the Benjamini-Schramm limit of $(G_n)$ is $\mu$, then the Benjamini-Schramm limit of $(G_n\square X)$ is $\mu_{\square X}$.
\end{prop}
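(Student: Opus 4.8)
The plan is to show that for each fixed radius $R$, the distribution of rooted $R$-balls in $G_n \square X$ converges to the distribution prescribed by $\mu_{\square X}$. The natural strategy is to reduce the $R$-ball structure in the product $G_n \square X$ to the $R$-ball structure in $G_n$ alone, exploiting that $X$ is a fixed finite graph of bounded size. The key observation is that since $X$ is finite with bounded diameter, a ball of radius $R$ around $(o,v)$ in $G_n \square X$ is entirely determined by the ball of radius $R$ around $o$ in $G_n$ together with the (fixed) choice of vertex $v \in V_X$. Concretely, if two roots $o, o'$ in graphs $G, G'$ have isomorphic $R$-balls, then $(o,v)$ and $(o',v)$ have isomorphic $R$-balls in $G\square X$ and $G' \square X$ respectively, since the Cartesian product structure within distance $R$ is reconstructed locally.

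\medskip

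\noindent First I would make precise the claim that the isomorphism class of $B_R((o,v))$ in $G \square X$ is a function only of the isomorphism class of $B_R(o)$ in $G$ and the vertex $v$. This follows because any path of length at most $R$ from $(o,v)$ in $G \square X$ projects to a walk of length at most $R$ in $G$ starting at $o$ (and a walk in $X$), so every vertex of $B_R((o,v))$ lies over a vertex of $B_R(o)$; the adjacencies among these product vertices are determined by the $G$-adjacencies within $B_R(o)$ and the fixed adjacencies of $X$. Thus there is a well-defined map $\Phi_{R,v}$ sending the isomorphism class of a rooted graph $(G,o)$ to the isomorphism class of $(G\square X, (o,v))$, depending only on $B_R(o)$.

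\medskip

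\noindent Next I would carry out the counting. Fix $R$ and a target isomorphism class $\tau$ of a rooted $R$-ball. By the definition of the Cartesian product's vertex set, choosing a root of $G_n \square X$ uniformly at random is the same as choosing $o \in V_n$ uniformly and $v \in V_X$ uniformly and independently. Therefore
\[
\Pr\big[B_R(\text{root of } G_n\square X) \cong \tau\big] = \frac{1}{|X|}\sum_{v\in V_X} \Pr\big[\Phi_{R,v}(B_R(o)) \cong \tau\big],
\]
where $o$ is uniform in $V_n$. Since $\Phi_{R,v}$ depends only on the isomorphism class of $B_R(o)$, the inner probability is a finite sum over the (finitely many, for $d$-regular graphs) possible $R$-ball classes of $G_n$, weighted by their empirical frequencies. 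By the hypothesis that $(G_n)$ has Benjamini--Schramm limit $\mu$, those empirical frequencies converge to the $\mu$-measures of the corresponding classes. Taking $n\to\infty$ and comparing the resulting expression term by term against the definition of $\mu_{\square X}$ gives exactly $\mu_{\square X}(\{\tau\})$, which is what we want.

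\medskip

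\noindent \textbf{The main obstacle} I anticipate is making the reduction $\Phi_{R,v}$ fully rigorous when $G_n$ is not regular or has large degrees, since then the set of possible $R$-ball isomorphism classes need not be finite and weak convergence must be handled via bounded continuous test functions rather than a finite sum; in the $d$-regular setting of interest the number of $R$-ball classes is finite and this subtlety disappears. A secondary technical point is verifying that the radius does not need to grow: because $X$ is fixed and finite, knowing $B_R(o)$ in $G_n$ truly suffices to determine $B_R((o,v))$ in $G_n\square X$, with no loss of radius, so no diagonal argument relating balls of different radii is required. Once the locality of $\Phi_{R,v}$ is established, the remainder is a direct bookkeeping match with the definition of $\mu_{\square X}$.
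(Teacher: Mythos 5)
Your proposal is correct and follows essentially the same route as the paper's proof: the paper's key step is the identity $B_{R}(G_n\square X,(o,v))\cong B_{R}(B_{R}(G_n,o)\square X,(o,v))$ (i.e., the $R$-ball in the product depends only on the $R$-ball in $G_n$ and the choice of $v$), which is exactly your map $\Phi_{R,v}$, followed by the same decomposition of a uniform root of $G_n\square X$ into a uniform $o\in V_n$ and a uniform $v\in V_X$ and passage to the limit. The only difference is presentational: the paper phrases convergence via an $\epsilon$-metrization of the weak topology with radius $1/\epsilon$, while you fix $R$ and track frequencies of the finitely many ball classes, which are equivalent formulations in this bounded-degree setting.
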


\begin{proof}
 Because $(G_n)$ converges to $\mu$, $\forall \epsilon>0$, $\exists N$ such that for $n>N$, the distribution of $1/\epsilon$ rooted balls in $G_n$ is within $\epsilon$ in any metrization of the weak topology of that of $\mu$. We denote by $B_r(G,o)$ the ball of radius $r$ around the root $o$ in $G$. We have that
 \[B_{1/\epsilon}(G_n\square X,(o,v))\cong B_{1/\epsilon}(B_{1/\epsilon}(G_n,o)\square X,(o,v)).\]
This is to say that the distribution over $1/\epsilon$ neighborhoods in $G_n\square X$ only depends on $G_n$ up to vertices of distance $1/\epsilon$. Therefore the distribution of $1/\epsilon$ balls with root $(u,v)$ in $G_n\square X$ obtained by sampling $u$ at random is within $\epsilon$ of the measure $\mu_{\square X}$ conditioned on the root being of the form $(\cdot,v)$ for specific $v\in V_X$. Sampling uniformly over $v\in V_X$ and sending $\epsilon\rightarrow 0$ gives the result. 
\end{proof}

Consider a graph $G$ with adjacency operator $\A$ and $z\in \C_+$. The Green's function $\G_G^z$ is the unique operator such that $(\A-z)\G_G^z=I_{|G|}$.

  \begin{thm}[Restatement of Theorem \ref{thm:green}]
  Take any (potentially infinite) graph $G_1$ and a finite graph $G_2$. Let $\psi_1,\ldots \psi_k$ be an orthonormal eigenbasis of $\A_2$ with eigenvalues $\lambda_1,\ldots, \lambda_k$.
  
We have

\[
\G_{G_1\square G_2}^{z}=\sum_{i=1}^{k} \G_{G_1}^{z-\lambda_i}\otimes \psi_i\psi_i^T.
\]
\end{thm}
Originally I wrote a proof for when $G_1=T_d$ which calculated entries of the Green's function recursively, similar to the proof of the Kesten-McKay measure using recursion (see for example Section 3 of \cite{warzel2013resonant}). The proof below was then sent to me by Mostafa Sabri, which generalizes to any $G_1$ and is less computationally intensive.

\begin{proof}
The adjacency operator of $G_1\square G_2$ is $\mathcal A_1\tensor I_{|G_2|}+I_{|G_1|}\tensor \mathcal A_2$, where $\A_1$ and $\A_2$ are the adjacency operators of $G_1$ and $G_2$ respectively.

\begin{eqnarray*}
(\A_1\tensor I_{|G_2|}+I_{|G_1|}\tensor\A_2-z)(\G_{G_1}^{z-\lambda_i}\otimes \psi_i\psi_i^T)&=&\A_1\G_{G_1}^{z-\lambda_i}\otimes \psi_i\psi_i^T+\G_{G_1}^{z-\lambda_i}\otimes \A_2\psi_i\psi_i^T-z(\G_{G_1}^{z-\lambda_i}\otimes \psi_i\psi_i^T)\\
&=&\A_1\G_{G_1}^{z-\lambda_i}\otimes \psi_i\psi_i^T+\lambda_i(\G_{G_1}^{z-\lambda_i}\otimes \psi_i\psi_i^T)-z(\G_{G_1}^{z-\lambda_i}\otimes \psi_i\psi_i^T)\\
&=&(\A_1+\lambda_i-z)\G_{G_1}^{z-\lambda_i}\otimes \psi_i\psi_i^T\\
&=&I_{|G_1|}\otimes \psi_i\psi_i^T.
\end{eqnarray*}
Therefore
\[
\sum_{i=1}^k (\A_1\tensor I_{|G_2|}+I_{|G_1|}\tensor\A_2-z)(\G_{G_1}^{z-\lambda_i}\otimes \psi_i\psi_i^T)=\sum_{i=1}^k I_{|G_1|}\tensor \psi_i\psi_i^T=I_{|G_1\square G_2|}
\]
as desired.

\end{proof}

\begin{thm}
 EXP and having bounded imaginary part in entries of the Green's function of the Benjamini-Schramm limit do not necessarily imply quantum ergodicity.
\end{thm}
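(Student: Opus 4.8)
The plan is to assemble the final theorem entirely from results already established in the excerpt, since each ingredient has been proved separately. The strategy is to exhibit a single family of graphs that simultaneously satisfies EXP, has bounded imaginary part of the Green's function of its Benjamini-Schramm limit, and yet fails to be quantum ergodic. The natural candidate is $(G_n \square C_4)$ where $(G_n)$ is chosen to be a family of $d$-regular graphs satisfying \emph{both} EXP and BST.

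First I would fix such a family $(G_n)$; the existence of $d$-regular graphs satisfying EXP and BST is standard (e.g.\ random regular graphs, or explicit Ramanujan families). The EXP property of $(G_n \square C_4)$ then follows immediately from the proposition in Section \ref{sec:onlyexp}, whose proof only used that $(G_n)$ satisfies EXP and that the spectrum of $\A_{C_4}$ is $\{2,0,0,-2\}$. Next, the failure of quantum ergodicity is exactly the content of Theorem \ref{thm:rs} (which implies Theorem \ref{thm:mainexp}): the localized eigenbasis of $C_4$ lifts via Fact \ref{fact:cartesian} to localized eigenvectors of $G_n \square C_4$, yielding the constant value $1/2$ regardless of the choice of $(G_n)$.

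The remaining and genuinely new step is the bound on the imaginary part of the Green's function of the Benjamini-Schramm limit. By Proposition \ref{prop:bs}, the limit of $(G_n \square C_4)$ is $\mu_{\square C_4}$, where $\mu$ is the limit of $(G_n)$; since $(G_n)$ satisfies BST, $\mu$ is the infinite $d$-regular tree $T_d$ (rooted uniformly), so the relevant limiting object is $T_d \square C_4$. I would then apply Theorem \ref{thm:green} with $G_1 = T_d$ and $G_2 = C_4$ to write
\[
\G_{T_d \square C_4}^{z} = \sum_{i=1}^{4} \G_{T_d}^{z-\lambda_i} \otimes \psi_i \psi_i^T,
\]
where $\lambda_i \in \{2,0,0,-2\}$ are real. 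Each shifted argument $z - \lambda_i$ remains in $\C^+$ whenever $z \in \C^+$, because the $\lambda_i$ are real and hence shifting by $-\lambda_i$ leaves the imaginary part unchanged. Invoking the known fact (cited to \cite{warzel2013resonant}) that the entries of $\G_{T_d}^{w}$ have bounded imaginary part for all $w \in \C^+$, and noting that the $\psi_i \psi_i^T$ factors are fixed real matrices of bounded entries, each entry of $\G_{T_d \square C_4}^z$ is a finite linear combination of quantities with uniformly bounded imaginary part, hence is itself uniformly bounded in imaginary part over $z \in \C^+$.

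I expect the main subtlety to be bookkeeping rather than a deep obstacle: one must confirm that ``bounded imaginary part in entries of the Green's function of the Benjamini-Schramm limit'' is the precise property stated, and that it is preserved under taking finite real linear combinations of tensor factors with the fixed matrices $\psi_i \psi_i^T$. Care is needed because the imaginary part of a sum of complex numbers is the sum of imaginary parts, so boundedness transfers cleanly, but one should verify there is no cancellation issue that would be needed (there is none, since we only need an upper bound, not positivity). Combining the three facts --- EXP from the proposition, bounded imaginary part from Theorem \ref{thm:green} together with the tree estimate, and violation of quantum ergodicity from Theorem \ref{thm:rs} --- completes the argument, establishing that this relaxation of BST is insufficient to force quantum ergodicity.
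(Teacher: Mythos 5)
Your proposal is correct and follows essentially the same route as the paper: take $(G_n)$ satisfying EXP and BST, apply the EXP proposition and Theorem \ref{thm:rs} to $(G_n\square C_4)$, identify the Benjamini-Schramm limit as $T_d\square C_4$ via Proposition \ref{prop:bs}, and use Theorem \ref{thm:green} together with the boundedness of $\mathrm{Im}\,\G_{T_d}^z$ to bound the imaginary part of the limit's Green's function. Your added observations (the shifts $z-\lambda_i$ stay in $\C^+$ since the $\lambda_i$ are real, and the $\psi_i\psi_i^T$ are fixed real matrices) are just slightly more explicit versions of the paper's remark that $\psi_i\psi_i^T$ has norm $1$.
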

  \begin{proof} Take the family $(G_n)$ to satisfy EXP and BST. $(G_n\square C_4)$ satisfies EXP and, by Proposition \ref{prop:bs}, has Benjamini-Schramm limit $T_d\square C_4$. The entries of $\G_{T_d}^z$ have bounded imaginary part for $z\in \C^+$, so by Theorem \ref{thm:green} so does $\G_{T_d\square C}^{z}$, as $\psi_i\psi_i^T$ has norm 1. However, by Theorem \ref{thm:rs}, $(G_n\square C_4)$ has a family of eigenbases that violates quantum ergodicity.
  \end{proof}

\section{BST is not sufficient for quantum ergodicity}\label{sec:onlybst}

Consider any family of $d$-regular graphs $(F_n)$ for $d$ even and $d\geq 8$ such that $(F_n)$ satisfies BST and $|F_n|=n$. We construct a family of graphs $(H_n)$ as follows. Delete an arbitrary edge of $F_n$, and call this new graph $F'_n$. Create $d/2$ copies of $F'_n$. Then add a vertex $v_n$, and add an edge from $v_n$ to each of the $d$ vertices of degree $d-1$, two for each copy of $F'_n$. Call this graph $H_n=(V_{H_n},E_{H_n})$.

\begin{figure}
    \centering
    \includegraphics[width=9cm]{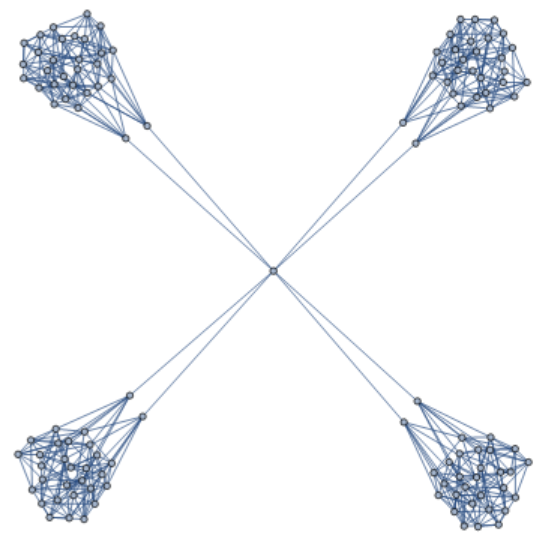}
    \caption{An example of a graph $H_n$ from Section \ref{sec:onlybst} drawn using Mathematica.}
    \label{fig:Hn}
\end{figure}

\begin{prop}
$(H_n)$ satisfies BST but not EXP. 
\end{prop}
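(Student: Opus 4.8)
The plan is to verify the two claims separately: $(H_n)$ satisfies BST, and $(H_n)$ fails EXP.

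\textbf{BST.} I would argue that $(H_n)$ is locally tree-like almost everywhere because we have made only an $O(1)$-size perturbation to each copy of $F'_n$, and $(F_n)$ already satisfies BST. Concretely, start from the fact that $(F_n)$ satisfies BST, so for each fixed $R$ the fraction of vertices in $F_n$ with injectivity radius less than $R$ tends to $0$. The construction does three things that can only affect injectivity radius locally: deleting one edge from each $F_n$ (affecting only the two endpoints and their $R$-neighborhoods), taking $d/2$ disjoint copies (which changes nothing about local structure, and keeps the total vertex count at $\Theta(n)$ since $|H_n| = (d/2)(n-1)+1$ grows linearly in $n$ up to the constant factor $d/2$), and adding the single new vertex $v_n$ together with its $d$ incident edges. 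The vertices whose injectivity radius could be ruined by this surgery are exactly those within distance $R$ of a deleted edge or within distance $R$ of $v_n$. In each copy there are only $O(d^R)$ vertices within distance $R$ of the two edge-endpoints, and $v_n$ itself contributes only $O(d\cdot d^R)$ nearby vertices. Hence the number of vertices with small injectivity radius in $H_n$ is at most $(d/2)\cdot(\text{bad count in }F_n) + O(d^{R+1})$, and dividing by $|H_n| = \Theta(n)$ sends this to $0$ as $n\to\infty$. I would also check the regularity of $H_n$: every vertex of each $F'_n$ that lost an edge regains one through $v_n$, so all original vertices have degree $d$, and $v_n$ has degree $d$ as well, so $H_n$ is $d$-regular and BST is the correct notion to verify.

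\textbf{Not EXP.} Here the idea is that $v_n$ is a cut vertex whose removal disconnects $H_n$ into $d/2$ nearly-equal pieces, and such a sparse cut forces $\lambda_2$ close to $d$. I would make this quantitative by exhibiting a test function. The natural choice is a function $f$ supported on the $d/2$ components that is constant on each copy (say $f$ takes value $c_j$ on the $j$-th copy of $F'_n$ and some value at $v_n$), chosen to be orthogonal to the all-ones vector and normalized. Because each copy of $F'_n$ is internally $d$-regular except at its two degree-$(d-1)$ vertices, the quadratic form $\langle f, \A_{H_n} f\rangle$ picks up the full diagonal contribution $d\|f\|^2$ from the interior, and loses only $O(1)$ per copy from the edges to $v_n$ and from the deleted-edge defect. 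The Rayleigh quotient $\langle f,\A_{H_n}f\rangle/\|f\|^2$ is therefore $d - O(1/n)$ for an appropriate such $f$ orthogonal to the constant vector, which shows $\lambda_2 \geq d - o(1)$. Since EXP requires $\lambda_2 \leq (1-\epsilon)d$ uniformly, this violates EXP.

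\textbf{Main obstacle.} I expect the BST half to be essentially bookkeeping, so the real work is the spectral-gap argument for failure of EXP, specifically choosing the test function cleanly and bounding the off-diagonal contributions. The delicate point is that the components are not perfectly $d$-regular (each has two degree-$(d-1)$ vertices), so the all-ones vector on a single copy is not exactly an eigenvector of the restricted operator and I must control the error this introduces. I would handle this either by using the exact top eigenvector of each restricted component as a building block, or by a direct Rayleigh-quotient estimate showing the defect is $O(1)$ while $\|f\|^2 = \Theta(n)$, so the quotient is $d - O(1/n)$. A clean alternative that sidesteps eigenvector perturbation entirely is to note that deleting the single vertex $v_n$ disconnects the graph into $d/2 \geq 4$ components, invoke a standard interlacing or Cheeger-type bound relating a sparse vertex cut to the spectral gap, and conclude $\max\{\lambda_2,|\lambda_n|\}$ cannot be bounded away from $d$ uniformly; I would favor the explicit test-function computation since it gives the sharper and more transparent estimate.
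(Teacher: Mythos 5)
Your proposal is correct, and both halves follow the same overall strategy as the paper (local surgery preserves BST; the sparse cut at $v_n$ kills EXP), but the execution differs in ways worth noting. For BST, the paper proves the pointwise inequality $\rho_{H_n}(x)\geq\rho_{F_n}(x)$ for every $x\neq v_n$, via a case analysis on cycles: a cycle through $x$ avoiding $v_n$ lies in $F_n$, while a cycle through $v_n$ must use both attachment points in $x$'s copy, and replacing the detour through $v_n$ by the deleted edge yields a cycle in $F_n$ of length at most two less. Your argument instead simply discards the $O(d^{R+1})$ vertices within distance $R$ of $v_n$ and observes that for all remaining vertices the $R$-ball in $H_n$ coincides with the $R$-ball inside one copy of $F'_n$, which is a subgraph of the corresponding ball in $F_n$ and hence still a tree; this sidesteps the cycle analysis entirely, is more robust (it is the standard ``finite perturbation does not change the Benjamini--Schramm limit'' argument), but yields a slightly weaker, purely asymptotic statement where the paper gets a clean monotonicity claim. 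For the failure of EXP, the paper cites Cheeger's inequality to get $\lambda_2\geq(1-4/n)d$ from the $2$-edge cut around one copy; your explicit test function (constant $c_j$ on the $j$th copy, $0$ at $v_n$, orthogonal to $\mathbf{1}$) is precisely the proof of the easy direction of that inequality and gives the sharper, self-contained bound $\lambda_2\geq d-2/n$. Two trivial slips to fix: each copy of $F'_n$ retains all $n$ vertices (only an edge is deleted), so $|H_n|=(d/2)n+1$, not $(d/2)(n-1)+1$; and the set of vertices whose injectivity radius can be ruined is \emph{contained in}, not ``exactly,'' the union of the $R$-neighborhoods of the surgery sites.
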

\begin{proof}
 Take a vertex $x\in V_{H_n}\backslash v_n$. To differentiate between the injectivity radii of $H_n$ and $F_n$, we will write $\rho_{H_n}(x)$ and $\rho_{F_n}(x)$. The former refers to the injectivity radius of $x$ in $H_n$, whereas the latter is the injectivity radius of the vertex corresponding to $x$ in $F_n$. We claim that $\rho_{H_n}(x)\geq \rho_{F_n}(x)$. To see this, take a cycle through $x$ in $H_n$. If it intersects $v_n$, then it must intersect both neighbors of $v_n$ in the copy of $F_n$ that contains $x$. Therefore the length of such a cycle is at least $2\rho_{F_n}(x)+2$. If a cycle does not intersect $v_n$, it remains in $F_n$ and has length at least $2\rho_{F_n}(x)+1$. Putting these together, we have $\rho_{H_n}(x)\geq \rho_{F_n}(x)$. Therefore, $(H_n)$ satisfies BST.

By Cheeger's inequality (see for example \cite{chung1996laplacians}), because there are only $2$ edges from one copy of $F'_n$ to the rest of the graph, $\lambda_2\geq (1-4/n)d$, meaning $(H_n)$ does not satisfy EXP.
\end{proof}
\begin{thm}[Implies Theorem \ref{thm:mainbst}]
The family of graphs $(H_n)$ has an orthonormal eigenbasis which violates quantum ergodicity. 
\end{thm}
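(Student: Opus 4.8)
The plan is to exploit the same Cartesian-product intuition from Section~\ref{sec:onlyexp}, but now the localization must come from a \emph{global} structure rather than a local one: the graph $H_n$ is built from $d/2$ nearly-disconnected copies of $F'_n$, joined only through the single cut vertex $v_n$. Because only two edges connect each copy of $F'_n$ to the rest of the graph, the copies are almost spectrally independent, and I expect most eigenvectors to be (nearly) supported on a single copy. The natural test function $a_n$ should then distinguish among the copies: for instance, assign $a_n\equiv +1$ on one chosen copy of $F'_n$ and spread the value $-1/(d/2-1)$ (or simply $\pm 1$ in a balanced way) across the remaining copies so that $\sum_v a_n(v)=0$ and $\|a_n\|_\infty\le 1$, while ensuring $a_n$ takes a constant sign on each block.

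**The key steps, in order.** First I would set up a clean orthonormal eigenbasis that respects the block structure. Ignoring the single vertex $v_n$ and the $d$ connecting edges, $H_n$ is the disjoint union of $d/2$ copies of $F'_n$; I would take the eigenvectors of each copy, which are automatically supported on that copy alone and mutually orthogonal across copies. The obstacle is that the connecting edges and $v_n$ perturb this decomposition, so these block-eigenvectors are not exactly eigenvectors of $\A_{H_n}$. To get an \emph{exact} eigenbasis that is still block-localized, I would use a symmetry argument: since the $d/2$ copies of $F'_n$ are identical and attached symmetrically to $v_n$, the cyclic (or full symmetric) permutation group $S_{d/2}$ acts on $H_n$ by permuting the copies, and $\A_{H_n}$ commutes with this action. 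Decomposing into isotypic components, the ``sign-type'' representations (e.g.\ assigning roots of unity or $\pm1$ weights to the copies, summing to zero) yield eigenvectors that live entirely within the copies and vanish at $v_n$; only the trivial (symmetric) component can have support on $v_n$. Within each non-trivial isotypic block one gets eigenvectors that are fixed linear combinations $\sum_j \omega^j \phi^{(j)}$ of corresponding eigenvectors across copies, with support spread over the copies but with controlled sign pattern.

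**The counting.** Having produced such an eigenbasis, I would count: of the $|V_{H_n}|=\tfrac{d}{2}n - 1 + 1$ (roughly $\tfrac{d}{2}(n-\text{small})$) vertices, the symmetric component contributes only about $n$ eigenvectors (those that can ``see'' $v_n$ and the symmetric combinations), while the remaining $(\tfrac{d}{2}-1)$ out of every $\tfrac{d}{2}$ eigenvectors come from non-trivial representations. For each such eigenvector the mass $|\psi(x)|^2$ is distributed across the copies in a pattern dictated by the representation, and I would choose $a_n$ so that $\langle \psi, a_n\psi\rangle$ is bounded away from zero — concretely, since such a vector is an equal-weight superposition over the copies, $a_n$ that is $+1$ on one copy and summing to zero elsewhere captures a $\Theta(1/d)$ fraction of the mass from each copy, yielding $|\langle\psi,a_n\psi\rangle|^2 \gtrsim 1/d^2$ per vector, and with a constant fraction of vectors contributing, the average in \eqref{eq:QE} is bounded below by $1/d$ as claimed.

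**The main obstacle** is constructing the \emph{exact} orthonormal eigenbasis with provable block/representation structure despite the coupling through $v_n$; the symmetry (isotypic) decomposition is the cleanest route, since it guarantees that every non-trivial-isotypic eigenvector vanishes at $v_n$ and is therefore a genuine eigenvector of the \emph{decoupled} operator restricted to that isotypic subspace, making the support pattern exact rather than approximate. Once that structural fact is in hand, the choice of $a_n$ and the lower-bound computation are routine.
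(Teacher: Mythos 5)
Your structural idea is sound and, in fact, generalizes what the paper does: the copies of $F'_n$ are attached symmetrically to $v_n$, the adjacency operator commutes with permutations of the copies, and every eigenvector in a non-trivial isotypic component vanishes at $v_n$ and is therefore an exact eigenvector of the decoupled block operator. The paper's construction is precisely one such vector: $\chi=\phi/\sqrt2$ on copy $1$, $-\phi/\sqrt2$ on copy $2$, and $0$ elsewhere, for each eigenvector $\phi$ of $\A_{F'_n}$. However, your final computation contains a genuine error. You choose the representation-adapted (roots-of-unity / ``equal-weight superposition'') basis, in which every non-trivial eigenvector carries mass exactly $1/m$ on each of the $m=d/2$ copies, and you pair it with a test function $a_n$ that is constant on each copy ($+1$ on one copy, $-1/(m-1)$ on the others) with total sum zero. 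For any such pair the matrix element vanishes identically:
\begin{equation*}
\langle \psi, a_n\psi\rangle \;=\; \sum_{j=1}^{m} \alpha_j \cdot \frac{1}{m} \;=\; \frac{1}{m}\sum_{j=1}^{m}\alpha_j \;=\;0,
\end{equation*}
not $\Theta(1/d)$ as you claim. A block-constant mean-zero observable simply cannot see an eigenvector whose mass is equidistributed over the blocks, so with this basis the quantum ergodicity sum gets no contribution at all from the non-trivial isotypic component. (Separately, even if your per-vector bound $\gtrsim 1/d^2$ had held, a constant fraction of vectors contributing $1/d^2$ each gives an average of order $1/d^2$, not the claimed $1/d$ --- though any positive constant would still violate quantum ergodicity.)

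The fix is to exploit the freedom you correctly identified but then did not use: each eigenvalue $\lambda$ of $\A_{F'_n}$ appears in the non-trivial component with multiplicity at least $m-1$, so you may recombine across characters and choose \emph{unequal-weight} sum-zero combinations, localized on as few copies as possible. This is exactly the paper's choice: take $\chi$ supported on copies $1$ and $2$ only (weights $+\tfrac{1}{\sqrt2},-\tfrac{1}{\sqrt2}$), and take $a_n=+1$ on copies $1,2$, $-1$ on copies $3,4$, and $0$ elsewhere (this is where $d\geq 8$, i.e.\ $m\geq 4$, is used). Then $\langle\chi,a_n\chi\rangle=1$ exactly for each of the $n$ orthonormal vectors so obtained, and completing them to any eigenbasis gives
\begin{equation*}
\frac{1}{\tfrac d2 n+1}\sum_{\psi\in\Lambda}|\langle\psi,a_n\psi\rangle|^2 \;\geq\; \frac{n}{\tfrac d2 n+1}\;\geq\;\frac1d,
\end{equation*}
as required. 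So the gap is not in the symmetry argument but in the pairing of basis with observable: localization of the chosen eigenvectors across copies, not mere vanishing at $v_n$, is what the test function detects.
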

\begin{proof}
 Enumerate the copies of $F'_n$ in $H_n$ $F'_{n,1},\ldots, F'_{n,d/2}$. For any eigenvector $\phi$ of $\A_{F'_n}$, a normalized eigenvector $\chi$ of $\A_{H_n}$ of the same eigenvalue is given by
 \[
\chi(u)=\left\{ \begin{array}{cc}
      \phi(u)/\sqrt 2& u\in V_{F'_{n,1}} \\
      -\phi(u)/\sqrt 2& u\in V_{F'_{n,2}} \\
      0& \textnormal{otherwise}.
 \end{array}\right.
 \]
 Call $X$ the set of eigenvectors of this type. 
We then set

\[
a_n(u)=\left\{\begin{array}{cc}
     1&  u\in V_{F'_{n,1}},V_{F'_{n,2}}\\
     -1& u\in V_{F'_{n,3}},V_{F'_{n,4}}\\
     0& \textnormal{otherwise}.
     \end{array}\right.
\]
$a_n$ satisfies the conditions of a test function for quantum ergodicity. If we take $\Lambda$ to be an eigenbasis of $\A_{H_n}$ that contains $X$,

\[
\frac{1}{\frac d2n+1}\sum_{\psi\in \Lambda} |\langle \psi,a_n\psi \rangle|^2\geq \frac1{\frac d2n+1}\sum_{\chi\in X} |\langle\chi,a_n\chi \rangle|^2 =\frac1{\frac d2n+1}\sum_{\chi\in X} 1=\frac{n}{\frac{d}{2}n+1}\geq \frac1d
\]
violating quantum ergodicity.
\end{proof}

\subsection*{Acknowledgments}
I thank Nalini Anantharaman, Shirshendu Ganguly, Mostafa Sabri, and Nikhil Srivastava for numerous helpful discussions and comments on earlier versions of this manuscript.

\bibliographystyle{alpha}
\bibliography{ref}

\end{document}